\begin{document}
\title{Low-Complexity Hybrid Precoding in Massive Multiuser MIMO Systems}
\author{
Le~Liang,~\IEEEmembership{Student Member,~IEEE},
Wei~Xu,~\IEEEmembership{Member,~IEEE},\\
~and Xiaodai Dong,~\IEEEmembership{Senior Member,~IEEE}
\thanks{
L. Liang and X. Dong are with the Department of Electrical and Computer Engineering, University
of Victoria, Victoria, BC V8W 3P6, Canada (email: liang@uvic.ca, xdong@ece.uvic.ca).
}
\thanks{
W. Xu is with the National Mobile Communications Research Lab., Southeast University, Nanjing 210096, China (email: wxu@seu.edu.cn).}
}

\maketitle

\newtheorem{lemma}{Lemma}
\newtheorem{theorem}{Theorem}
\newtheorem{remark}{Remark}

\begin{abstract}
Massive multiple-input multiple-output (MIMO) is envisioned to offer considerable capacity improvement, but at the cost of high complexity of the hardware. In this paper, we propose a low-complexity hybrid precoding scheme to approach the performance of the traditional baseband zero-forcing (ZF) precoding (referred to as full-complexity ZF), which is considered a virtually optimal linear precoding scheme in massive MIMO systems. The proposed hybrid precoding scheme, named phased-ZF (PZF), essentially applies phase-only control at the RF domain and then performs a low-dimensional baseband ZF precoding based on the effective channel seen from baseband. Heavily quantized RF phase control up to $2$ bits of precision is also considered and shown to incur very limited degradation. The proposed scheme is simulated in both ideal Rayleigh fading channels and sparsely scattered millimeter wave (mmWave) channels, both achieving highly desirable performance.
\end{abstract}

\begin{IEEEkeywords}
Massive MIMO, hybrid precoding, millimeter wave (mmWave) MIMO, RF chain limitations.
\end{IEEEkeywords}

\section{Introduction}
Massive multiple-input multiple-output (MIMO) is known to achieve high capacity performance with simplified transmit precoding/receive combining design \cite{Marzetta}-\cite{massiveNextGeneration}. Most notably, simple linear precoding schemes, such as zero-forcing (ZF), are virtually optimal and comparable to nonlinear precoding like the capacity-achieving dirty paper coding (DPC) in massive MIMO systems \cite{scalingMIMO}. However, to exploit multiple antennas, the convention is to modify the amplitudes and phases of the complex symbols at the baseband and then upcovert the processed signal to around the carrier frequency after passing through digital-to-analog (D/A) converters, mixers, and power amplifiers (often referred to as the radio frequency (RF) chain). Outputs of the RF chain are then coupled with the antenna elements. In other words, each antenna element needs to be supported by a dedicated RF chain. This is in fact too expensive to be implemented in massive MIMO systems due to the large number of antenna elements.

On the other hand, cost-effective variable phase shifters are readily available with current circuitry technology, making it possible to apply high dimensional phase-only RF or analog processing \cite{Love:EqualGain}-\cite{Veen:analogBF}. Phase-only precoding is considered in \cite{Love:EqualGain}, \cite{Zheng:UniformPowerConstraint} to achieve full diversity order and near-optimal beamforming performance through iterative algorithms.
The limited baseband processing power can further be exploited to perform multi-stream signal processing as in \cite{Zhang:variableshift},
where both diversity and multiplexing transmissions of MIMO communications are addressed with less RF chains than antennas. \cite{Veen:analogBF} then takes into account more practical constraints such as only quantized phase control and finite-precision analog-to-digital (A/D) conversion. Works in \cite{Love:EqualGain}-\cite{Veen:analogBF}, however, do not consider the multiuser scenario and are not aimed to maximize the capacity performance in the large array regime.

In this paper, we consider the practical constraints of RF chains and propose to design the RF precoder by extracting the phases of the conjugate transpose of the aggregate downlink channel to harvest the large array gain in massive MIMO systems, inspired by \cite{Zhang:variableshift}. Low-dimensional baseband ZF precoding is then performed based on the equivalent channel obtained from the product of the RF precoder and the actual channel matrix. This hybrid precoding scheme, termed PZF, is shown to approach the performance of the virtually optimal yet practically infeasible full-complexity ZF precoding in a massive multiuser MIMO scenario.
Furthermore, hybrid baseband and RF precoding has been considered for millimeter wave (mmWave) communications in works \cite{Roh:Samsung}-\cite{Sayeed:MUmmWave}. They share the idea of capturing ``dominant" paths of mmWave channels using RF phase control and the RF processing is constrained, more or less, to choose from array response vectors. We will also show in the simulation the desirable performance of our proposed PZF scheme in mmWave channels.


\section{System Model}

\begin{figure}
\centering
\includegraphics[width = 0.48\textwidth, clip,keepaspectratio]{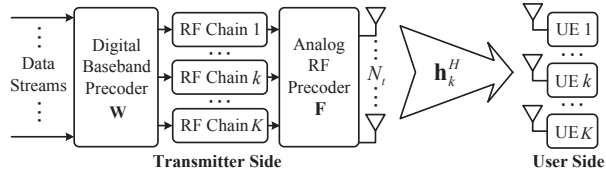}
\caption{System model of the hybrid mmWave precoding structure.} \label{fig:system}
\end{figure}

We consider the downlink communication of a massive multiuser MIMO system as shown in Fig.~\ref{fig:system}, where the base station (BS) is equipped with $N_t$ transmit antennas, but driven by a far smaller number of RF chains, namely, $K$. This chain limitation restricts the maximum number of transmitted streams to be $K$ and we assume scheduling exactly $K$ single-antenna users, each supporting single-stream transmission. As discussed, the downlink precoding is divided among baseband and RF processing, denoted by ${\bf W}$ of dimension $K \times K$ and ${\bf F}$ of dimension $N_t \times K$, respectively. Notably, both amplitude and phase modifications are feasible for the baseband precoder ${\bf W}$, but only phase changes can be made to the RF precoder ${\bf F}$ with variable phase shifters and combiners \cite{Zhang:variableshift}. Thus each entry of ${\bf F}$ is normalized to satisfy $|{\bf F}_{i, j}| = \frac{1}{\sqrt{N_t}}$ where $|{\bf F}_{i, j}|$ denotes the magnitude of the $(i, j)$th element of ${\bf F}$.

We adopt a narrowband flat fading channel and obtain the sampled baseband signal received at the $k$th user
\begin{align}\label{eq:yk}
y_k = {\bf h}_k^H {\bf F W s}+ n_k
\end{align}
where ${\bf h}_k^H$ is the downlink channel from the BS to the $k$th user, and ${\bf s}\in {\mathbb C}^{K \times 1}$ denotes the signal vector for a total of $K$ users, satisfying ${\mathbb E}[{\bf ss}^H] = \frac{P}{K} {\bf I}_K$ where $P$ is the transmit power at the BS and ${\mathbb E}[\cdot]$ is the expectation operator. To meet the total transmit power constraint, we further normalize ${\bf W}$ to satisfy $\|{\bf F W} \|_F^2 = K$.
$n_k$ denotes the additive noise, assumed to be circular symmetric Gaussian with unit variance, i.e., $n_k \sim \mathcal{CN}(0, 1)$.
Then the received signal-to-interference-plus-noise-ratio (SINR) at the $k$th user is given by
\begin{align}\label{eq:SINR}
\text{SINR}_k = \frac{ \frac{P}{K} |{\bf h}_k^H {\bf F w}_k|^2  }{1 + \sum \nolimits_{j \neq k} \frac{P}{K}|{\bf h}_k^H {\bf F w}_j|^2     }
\end{align}
where ${\bf w}_j$ denotes the $j$th column of ${\bf W}$.
If Gaussian inputs are used, the system can achieve a long-term average (over the fading distribution) spectral efficiency
\begin{align}\label{eq:rate}
R = \sum\nolimits_{k=1}^K{\mathbb E}\left[\log_2(1+{\text {SINR}_k}) \right].
\end{align}

\section{Hybrid Precoding in Massive MIMO Systems}\label{sec:scheme}
In massive MIMO systems, ZF precoding is known as a prominent linear precoding scheme to achieve virtually optimal capacity performance due to the asymptotic orthogonality of user channels in richly scattering environment \cite{scalingMIMO}. It is typically realized through baseband processing, requiring $N_t$ RF chains performing RF-baseband frequency translation and A/D conversion. This tremendous hardware requirement, however, restricts the array size from scaling large.

To alleviate the hardware constraints while realizing full potentials of massive multiuser MIMO systems, we propose to apply phase-only control to couple the $K$ RF chain outputs with $N_t$ transmit antennas, using cost-effective RF phase shifters. Low-dimensional multi-stream processing is then performed at the baseband to manage inter-user interference. The proposed low-complexity hybrid precoding scheme, termed phased-ZF (PZF), can approach the performance of the full-complexity ZF precoding, which is, as stated, practically infeasible due to the requirement of supporting each antenna with a dedicated RF chain. The spectral efficiency achieved by the proposed PZF scheme is then analyzed.

\subsection{Hybrid Precoder Design}\label{sec:scheme}
The structure shown in Fig.~\ref{fig:system} is exploited to perform the proposed hybrid baseband and RF joint processing, where the baseband precoder ${\bf W}$ modifies both the amplitudes and phases of incoming complex symbols and the RF precoder ${\bf F}$ controls phases of the upconverted RF signal. We propose to perform phase-only control at the RF domain by extracting phases of the conjugate transpose of the aggregate downlink channel from the BS to multiple users. This is to align the phases of channel elements and can thus harvest the large array gain provided by the excessive antennas in massive MIMO systems.
To clarify, denote ${\bf F}_{i,j}$ as the $(i, j)$th element of ${\bf F}$ and we perform the RF precoding according to
\begin{align}\label{eq:F}
{\bf F}_{i,j} = \frac{1}{\sqrt{N_t}} e^{j \varphi_{i,j}}
\end{align}
where $\varphi_{i,j}$ is the phase of the $(i, j)$th element of the conjugate transpose of the composite downlink channel, i.e., $\left[{\bf h}_1, \cdots, {\bf h}_K \right]$. Here we implicitly assume perfect channel knowledge at the BS which can potentially be obtained, e.g., through uplink channel estimation combined with channel reciprocity in time division duplex (TDD) systems \cite{Marzetta}. We note that efficient channel estimation techniques leveraging hybrid structures and rigorous treatment of frequency selectivity are an ongoing research topic of great practical interest.

Then at the baseband, we observe an equivalent channel ${\bf H}_{eq} = {\bf HF}$ of a low dimension $K\times K$ where ${\bf H} = \left[{\bf h}_1, \cdots, {\bf h}_K \right]^H$ is the composite downlink channel. Hence multi-stream baseband precoding can be applied to ${\bf H}_{eq}$, where simple low-dimensional ZF precoding is performed as
\begin{align}\label{eq:W}
{\bf W} =  {\bf H}_{eq}^H( {\bf H}_{eq} {\bf H}_{eq}^H  )^{-1}{\bf \Lambda}
\end{align}
where ${\bf \Lambda}$ is a diagonal matrix, introduced for column power normalization. With this PZF scheme, to support simultaneous transmission of $K$ streams, \emph{hardware complexity is substantially reduced, where only $K$ RF chains are needed, as compared to $N_t$ required by the full-complexity ZF precoding}.

{\textbf {Quantized RF Phase Control:}} According to \eqref{eq:F}, each entry of the RF precoder ${\bf F}$ differs only in phases which assume continuous values. However, in practical implementation, the phase of each entry tends to be heavily quantized due to practical constraints of variable phase shifters. Therefore, we need to investigate the performance of our proposed PZF precoding scheme in this realistic scenario, i.e., phases of the $KN_t$ entries of ${\bf F}$ are quantized up to $B$ bits of precision, each quantized to its nearest neighbor based on closest Euclidean distance. The phase of each entry of ${\bf F}$ can thus be written as ${\hat\varphi} = \left(2 \pi {\hat n} \right)/\left(2^B \right)$ where ${\hat n}$ is chosen according to
\begin{align}\label{eq:quantize}
{\hat n} = \arg \min \limits_{n \in \{0, \cdots, 2^B-1\}} \left| \varphi - \frac{2 \pi n}{2^B}  \right|
\end{align}
where $\varphi$ is the unquantized phase obtained from \eqref{eq:F}. Then the baseband precoder is computed by \eqref{eq:W} with the quantized ${\bf F}$.



\subsection{Spectral Efficiency Analysis in Rayleigh Fading Channels}
In this part, we analyze the spectral efficiency achieved by our proposed PZF and full-complexity ZF precoding in the limit of large transmit antenna size $N_t$ assuming Rayleigh fading. Closed-form expressions are derived, revealing the roles different parameters play in affecting system capacity.

Denoting the $k$th column of ${\bf F}$ by ${\bf f}_k$, we obtain
\begin{align}
y_k  = \left[{\bf h}_k^H {\bf f}_1, \cdots, {\bf h}_k^H {\bf f}_k, \cdots, {\bf h}_k^H {\bf f}_K  \right] {\bf W s} + n_k
\end{align}
based on \eqref{eq:yk}.
As described in Section \ref{sec:scheme}, ${\bf f}_k$ is designed by extracting the phases of ${\bf h}_k$, we thus have the diagonal term
\begin{align}
{\bf h}_k^H {\bf f}_k & = \frac{1}{\sqrt{N_t}} \sum\limits_{i = 1}^{N_t} |h_{i,k}|
\end{align}
where $h_{i,k}$ denotes the $i$th element of the vector ${\bf h}_k$.
Under the assumption that each element of ${\bf h}_k$ is independent and identically distributed (i.i.d.) complex Gaussian random variable with unit variance and zero mean, i.e., $h \sim {\mathcal {CN}}(0, 1)$, we conclude that $|h|$ follows Rayleigh distribution with mean $\frac{\sqrt{\pi}}{2}$ and variance $1-\frac{\pi}{4}$. When $N_t$ tends to infinity, the central limit theorem indicates
\begin{align}\label{eq:hkfk}
{\bf h}_k^H {\bf f}_k \mathop \sim {\mathcal N}(\frac{\sqrt{\pi N_t}}{2}, 1-\frac{\pi}{4}).
\end{align}

For the off-diagonal term, i.e., $j \neq k$, we have ${\bf h}_k^H {\bf f}_j = \frac{1}{\sqrt{N_t}} \sum\limits_{i=1}^{N_t} h_{i,k}^* {e^{j\varphi_{i,j}}}$, where $(\cdot)^*$ gives the complex conjugation. Its distribution is characterized in the lemma below.
\begin{lemma}\label{lemma}
In Rayleigh fading channels, the off-diagonal term ${\bf h}_k^H {\bf f}_j$ is distributed according to ${\bf h}_k^H {\bf f}_j \sim {\mathcal {CN}}(0,1)$.
\end{lemma}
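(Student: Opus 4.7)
The plan is to exploit the key asymmetry between the diagonal and off-diagonal cases: for $j\neq k$, the RF beamformer ${\bf f}_j$ is constructed from ${\bf h}_j$ alone (its entries are the conjugate phases of ${\bf h}_j$), and is therefore statistically independent of ${\bf h}_k$. This suggests conditioning on ${\bf f}_j$, treating its entries as fixed unit-modulus complex numbers, and computing the conditional distribution of the inner product ${\bf h}_k^H {\bf f}_j$ in closed form. Because the conditional law will turn out not to depend on ${\bf f}_j$, the unconditional distribution will then follow immediately.

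The core calculation is to fix ${\bf f}_j$ and write
\begin{equation*}
{\bf h}_k^H {\bf f}_j = \frac{1}{\sqrt{N_t}}\sum_{i=1}^{N_t} h_{i,k}^*\, e^{j\varphi_{i,j}}.
\end{equation*}
Since $h_{i,k}\sim \mathcal{CN}(0,1)$ is circularly symmetric, the conjugate $h_{i,k}^*$ is again $\mathcal{CN}(0,1)$, and multiplication by the deterministic unit-modulus scalar $e^{j\varphi_{i,j}}$ preserves this distribution. Hence the $N_t$ summands are i.i.d.\ $\mathcal{CN}(0,1)$ conditional on ${\bf f}_j$ and independent across $i$; their sum is $\mathcal{CN}(0,N_t)$, and the $1/\sqrt{N_t}$ prefactor rescales the result to $\mathcal{CN}(0,1)$.

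Finally, since the conditional distribution obtained is free of any dependence on ${\bf f}_j$, the unconditional law of ${\bf h}_k^H {\bf f}_j$ is also $\mathcal{CN}(0,1)$, which is the claim. The only step that needs a touch of care — and which I would flag as the main potential obstacle — is making explicit that both complex conjugation and multiplication by a deterministic phase are measure-preserving on $\mathcal{CN}(0,1)$; this follows from circular symmetry but deserves to be stated rather than glossed over. Note that, unlike the diagonal term in \eqref{eq:hkfk}, no central limit argument enters here: the identity is exact for every finite $N_t$.
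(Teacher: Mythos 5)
Your argument is correct and complete, and it is exact for every finite $N_t$, as you note. The paper itself only sketches a proof: it says the result follows by ``analyzing the real and imaginary parts of ${\bf h}_k^H{\bf f}_j$ separately, followed by proving their independence,'' i.e.\ a component-wise computation showing each part is Gaussian with variance $\tfrac12$ and that their covariance vanishes. Your route is a cleaner packaging of the same underlying fact: you condition on ${\bf f}_j$ and invoke the invariance of $\mathcal{CN}(0,1)$ under conjugation and under multiplication by a deterministic unit-modulus scalar (circular symmetry), so the $N_t$ summands are conditionally i.i.d.\ $\mathcal{CN}(0,1)$ and the normalized sum is exactly $\mathcal{CN}(0,1)$, with the conditional law independent of ${\bf f}_j$. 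What your version buys is that it makes explicit the one step the paper glosses over entirely: for $j \neq k$ the beamformer ${\bf f}_j$ is a function of ${\bf h}_j$ alone and is therefore independent of ${\bf h}_k$, which is what legitimizes treating the phases $\varphi_{i,j}$ as deterministic in the conditional calculation. The only thing you should state for completeness is that the channel vectors of distinct users are mutually independent --- the paper asserts i.i.d.\ entries only \emph{within} each ${\bf h}_k$, and without cross-user independence the conditioning step (and indeed the lemma itself, e.g.\ at $j=k$) would fail. With that hypothesis made explicit, your proof is airtight and arguably preferable to the paper's sketch.
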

\begin{proof}
The proof is achieved by analyzing the real and imaginary parts of ${\bf h}_k^H{\bf f}_j$ separately, followed by proving their independence. The proof is straightforward by definitions, and hence details are left out due to space limit.
\end{proof}

Based on Lemma~\ref{lemma}, we derive that the magnitude of the off-diagonal term, i.e., $|{\bf h}_k^H {\bf f}_j|$ follows the Rayleigh distribution with mean $\frac{\sqrt{\pi}}{2}$ and variance $1-\frac{\pi}{4}$. Compared with the diagonal term ${\bf h}_k^H {\bf f}_k$ given by \eqref{eq:hkfk}, it is safe to say \emph{the off-diagonal terms are negligible when the transmit antenna number $N_t$ is fairly large}. This implies that the inter-user interference is essentially negligible even without baseband processing at large $N_t$!
However, we note when $N_t$ assumes some medium high values, the residual interference may still deteriorate the system performance. Therefore we apply in our proposed scheme ZF processing at the baseband to suppress it as in \eqref{eq:W}.

We reason that even with ZF processing at the baseband, the spectral efficiency achieved is still less than it would be if the off-diagonal terms ${\bf h}_k^H{\bf f}_j$'s were precisely zero. In other words, the spectral efficiency achieved by PZF is upper bounded by $K{\mathcal R}$ with ${\mathcal R} = {\mathbb E}\left[\log_2\left(1 +\frac{P}{K} |{\bf h}_k^H{\bf f}_k|^2 \right)\right]$, which can be characterized by the following theorem using the limit equivalence type of argument \cite{Raghavan}.
\begin{theorem}\label{theorem}
The spectral efficiency achieved by the proposed low-complexity PZF precoding scheme is tightly upper bounded by $R_{\text{PZF}} \leq K{\mathcal R}$ where
\begin{align}\label{eq:rateHybrid}
\lim\limits_{N_t\rightarrow\infty} \frac{\mathcal R}{\log_2\left(1 + \frac{\pi}{4}\frac{P N_t}{K} \right)} = 1.
\end{align}
\end{theorem}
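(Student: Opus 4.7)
The plan is to dissect the statement into two claims that can be proved independently: (i) the upper bound $R_{\text{PZF}}\le K\mathcal{R}$, and (ii) the logarithmic equivalence \eqref{eq:rateHybrid}. For part (i), I would formalize the intuition given just before the theorem. Observe that $K\mathcal{R}$ is exactly the sum rate attained in the hypothetical regime where the off-diagonal terms $\mathbf{h}_k^H\mathbf{f}_j$, $j\neq k$, are identically zero: in that regime $\mathbf{H}_{eq}$ is diagonal, the baseband ZF precoder \eqref{eq:W} collapses to a diagonal power allocation, and the SINR in \eqref{eq:SINR} reduces to the interference-free expression $\tfrac{P}{K}|\mathbf{h}_k^H\mathbf{f}_k|^2$. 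Since the actual PZF precoder operates under the same total-power budget $\|\mathbf{F}\mathbf{W}\|_F^2=K$ but must additionally rotate each $\mathbf{w}_k$ away from the matched direction in order to null genuine cross-terms, the realized signal power per user can only be smaller; summing over users gives the claim.

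For part (ii), I would exploit the representation $\mathbf{h}_k^H\mathbf{f}_k=\tfrac{1}{\sqrt{N_t}}\sum_{i=1}^{N_t}|h_{i,k}|$, a scaled sum of i.i.d.\ Rayleigh variates of mean $\sqrt{\pi}/2$ and variance $1-\pi/4$. The strong law of large numbers immediately yields $\tfrac{1}{N_t}|\mathbf{h}_k^H\mathbf{f}_k|^2\to\tfrac{\pi}{4}$ almost surely, so on every typical sample path the integrand defining $\mathcal{R}$ is asymptotically equivalent to $\log_2\!\bigl(1+\tfrac{\pi PN_t}{4K}\bigr)$ in the sense that their ratio tends to $1$. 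This establishes pointwise (almost sure) convergence of the ratio in \eqref{eq:rateHybrid}.

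The delicate step, and the main obstacle, is to transport this pointwise convergence through the expectation defining $\mathcal{R}$; this is precisely the limit-equivalence technique attributed to \cite{Raghavan}. I would squeeze $\mathcal{R}$ between a Jensen upper bound $\mathcal{R}\le\log_2\!\bigl(1+\tfrac{P}{K}\mathbb{E}|\mathbf{h}_k^H\mathbf{f}_k|^2\bigr)=\log_2\!\bigl(1+\tfrac{P}{K}[\tfrac{\pi N_t}{4}+(1-\tfrac{\pi}{4})]\bigr)$, which after dividing by $\log_2(1+\tfrac{\pi PN_t}{4K})$ tends to $1$, and a matching lower bound obtained by restricting the expectation to the concentration set $\mathcal{A}_\epsilon=\{|\mathbf{h}_k^H\mathbf{f}_k|^2\ge(1-\epsilon)\tfrac{\pi N_t}{4}\}$. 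The $O(1)$ variance from \eqref{eq:hkfk} and Chebyshev's inequality give $\Pr(\mathcal{A}_\epsilon^c)=o(1)$, while on $\mathcal{A}_\epsilon^c$ the integrand is dominated by the uniform envelope $\log_2(1+\tfrac{P}{K}\|\mathbf{h}_k\|^2)$, whose expectation grows only as $O(\log N_t)$; thus the tail contributes $o(\log N_t)$ to $\mathcal{R}$ and the sandwich closes, establishing \eqref{eq:rateHybrid}.
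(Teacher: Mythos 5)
Your proposal is essentially correct, and for the core limit \eqref{eq:rateHybrid} it takes a genuinely different route from the paper. The paper first replaces ${\bf h}_k^H{\bf f}_k$ by its CLT surrogate $y+\tfrac{\sqrt{\pi N_t}}{2}$ with $y\sim\mathcal N(0,1-\tfrac{\pi}{4})$ from \eqref{eq:hkfk}, writes $\mathcal R=\log_2\bigl(1+\tfrac{\pi}{4}\tfrac{PN_t}{K}\bigr)+\Delta$, and shows $\Delta\to 0$ by squeezing: Jensen for the upper bound, and for the lower bound an explicit Gaussian integral estimate using the Mean Value Theorem for integrals and the inequality $\int_0^1(\ln x)e^{mx}\,dx\ge\tfrac{-e^m+1}{m}$. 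You instead work directly with the exact random variable $\tfrac{1}{\sqrt{N_t}}\sum_i|h_{i,k}|$, sandwiching $\mathcal R$ between the Jensen bound $\log_2\bigl(1+\tfrac{P}{K}\bigl[\tfrac{\pi N_t}{4}+1-\tfrac{\pi}{4}\bigr]\bigr)$ and a truncated lower bound on the Chebyshev concentration set $\mathcal A_\epsilon$; since the integrand is nonnegative you may simply discard $\mathcal A_\epsilon^c$ there, so the Cauchy--Schwarz envelope $\log_2(1+\tfrac{P}{K}\|{\bf h}_k\|^2)$ you invoke is not even needed. Your route buys something real: it proves the additive claim $\mathcal R=\log_2\bigl(1+\tfrac{\pi}{4}\tfrac{PN_t}{K}\bigr)+o(\log N_t)$ (which suffices for the ratio limit) without the unjustified step of treating the CLT distributional limit as the exact finite-$N_t$ law inside the expectation, and it avoids the delicate integral asymptotics; what it gives up is the paper's stronger (if only heuristically supported) assertion that the additive error $\Delta$ itself vanishes, not merely that it is $o(\log N_t)$. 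On the bound $R_{\text{PZF}}\le K\mathcal R$, your power-rotation argument is at the same heuristic level as the paper's one-sentence reasoning preceding the theorem --- neither of you verifies that, under the normalization $\|{\bf F W}\|_F^2=K$, the per-user ZF signal power $\tfrac{P}{K}\Lambda_{kk}^2$ is actually dominated by $\tfrac{P}{K}|{\bf h}_k^H{\bf f}_k|^2$ --- so no points are lost relative to the paper, but be aware this step remains unproven in both treatments.
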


\begin{proof}
The per-user upper bound is derived as
\begin{align}
{\mathcal R} & =  {\mathbb E} \left[\log_2\left(1 + \frac{P}{K} \left(y + \frac{\sqrt{\pi N_t}}{2} \right)^2 \right) \right] \nonumber \\
& =  \log_2\left(1+\frac{\pi}{4}\frac{PN_t}{K} \right) + \underbrace{{\mathbb E}\left[\log_2\left(\frac{1+\frac{P}{K}\left( y + \frac{\sqrt{\pi N_t}}{2}\right)^2}{1+\frac{\pi N_t}{4}\frac{P}{K}}\right)\right]}\limits_{\Delta} \nonumber
\end{align}
where $y \sim {\mathcal N}(0, \sigma^2)$ with $\sigma=\sqrt{1-\frac{\pi}{4}}$. One may prove $\lim\limits_{N_t\rightarrow\infty}\Delta = 0$ by showing $\Delta$ is both upper and lower bounded by zero in the limit. An upper bound can be directly proved by applying the Jensen's Inequality. Proof of the lower bound is involved. Briefly, by defining $\rho \mathop = \limits^{\Delta} \frac{P}{K}$ and $a \mathop =\limits^{\Delta} \frac{\sqrt{\pi N_t}}{2}$, we have
\begin{align}
&\lim\limits_{N_t\rightarrow\infty}\Delta \geq \lim\limits_{a\rightarrow\infty} {\mathbb E}\left[\log_2\left(1+\frac{y}{a} \right)^2 \right] + \lim\limits_{a\rightarrow\infty}\log_2\frac{a^2}{\frac{1}{\rho} + a^2} \nonumber \\
& = \lim\limits_{a\rightarrow\infty}\frac{2a\log_2e}{\sqrt{2\pi}\sigma}\int_{0}^{+\infty} (\ln x)e^{-\frac{a^2(x-1)^2}{2\sigma^2}}\left(1 + e^{-\frac{2a^2x}{\sigma^2}} \right) dx \nonumber\\
& \mathop \geq\limits^{(a)} \lim\limits_{a\rightarrow\infty}\frac{2a \log_2e}{\sqrt{2\pi}\sigma}\left(1 + e^{-\frac{2a^2\xi}{\sigma^2}} \right)\int_0^1 (\ln x) e^{-\frac{a^2(x-1)^2}{2\sigma^2}}dx \nonumber \\
& = \lim\limits_{a\rightarrow\infty} \frac{2ae^{-\frac{a^2}{2\sigma^2}}}{\sqrt{2\pi}\sigma\ln2}\left(1 + e^{-\frac{2a^2\xi}{\sigma^2}} \right) \int_0^1 (\ln x) e^{\frac{a^2x}{2\sigma^2}} dx \nonumber\\
 & \mathop \geq\limits^{(b)}  \lim\limits_{a\rightarrow\infty} \frac{2ae^{-\frac{a^2}{2\sigma^2}}}{\sqrt{2\pi}\sigma\ln2}\left(1 + e^{-\frac{2a^2\xi}{\sigma^2}} \right)\frac{2\sigma^2 \left(1-e^{\frac{a^2}{2\sigma^2}}\right)}{a^2} =0 \nonumber
\end{align}
where (a) holds by shortening the integral range and then applying the Mean Value Theorem for Integral with $\xi\in(0,1)$. (b) is valid by using $\int_0^1(\ln x)e^{mx}dx \geq \frac{-e^m+1}{m}$ for $m\ge 0$.
\end{proof}

\begin{remark}
 Considering that the off-diagonal terms ${\bf h}_k^H {\bf f}_j$'s are essentially negligible when $N_t$ is large, we expect the derived closed-form upper bound to be very tight in the large antenna regime. This is further verified in the simulation results as shown in Fig.~2. Thus the closed-form upper bound serves as a good approximation of the spectral efficiency achieved by the proposed PZF precoding scheme at large $N_t$.
\end{remark}

The full-complexity ZF precoding vector (with unit norm) for the $k$th stream follows by projecting ${\bf h}_k$ onto the nullspace of ${\tilde {\bf H}}_k = \left[ {\bf h_1}, \cdots,{\bf h}_{k-1}, {\bf h}_{k+1}, \cdots, {\bf h}_K \right]^H$. In the spectral efficiency analysis, we exploit the property that users' channels are asymptotically orthogonal in massive multiuser MIMO systems \cite{Marzetta}. It indicates full-complexity ZF precoding converges to conjugate beamforming with \emph{inter-user interference forced to zero} , achieving
${\text {SINR}}_k \rightarrow \frac{P}{K}|{\bf h}_k|^2, \; \text{as} \; N_t \rightarrow\infty.$ Then according to \eqref{eq:rate}, we obtain the spectral efficiency of full-complexity ZF precoding in the limit of large $N_t$ as \cite{Goldsmith:closed}
\begin{align}\label{eq:rateFCZF}
R_{\text{FC-ZF}} \rightarrow & K{\mathbb E}\left[ \log_2\left(1 + \frac{P}{K} \left|{\bf h}_k\right|^2 \right)  \right] \nonumber \\
 = & Ke^{\frac{K}{P}} \log_2e \sum\limits_{n=1}^{N_t} E_n\left(\frac{K}{P} \right)
\end{align}
by acknowledging that $|{\bf h}_k|^2$ follows chi-squared distribution with $2N_t$ degrees of freedom and $E_n(x)$ is the exponential integral of order $n$.

\section{Simulation Results}

\subsection{Large Rayleigh Fading Channels}
We numerically compare our proposed PZF precoding scheme in Fig.~\ref{fig:mainCompareScheme} along with its quantized version against the full-complexity ZF scheme, which is deemed virtually optimal in the large array regime but practically infeasible due to the requirement of $N_t$ costly RF chains.
It is observed that the proposed PZF precoding performs measurably close to the full-complexity ZF precoding, with less than $1$ dB loss but substantially reduced complexity. As for the heavily quantized phase control, we find that with $B = 2$ bits of precision, i.e., phase control candidates of $\{0, \pm \frac{\pi}{2}, \pi \}$, the proposed scheme suffers negligible degradation, say less than $1$ dB.

The derived analytical spectral efficiency expressions \eqref{eq:rateHybrid} and \eqref{eq:rateFCZF} are also plotted in Fig.~\ref{fig:mainCompareScheme}. We observe that the derived closed-form expressions are quite accurate in characterizing spectral efficiencies achieved by the proposed PZF precoding and full-complexity ZF precoding schemes throughout the whole signal-to-noise (SNR)\footnote{Here $\text{SNR} = P$ is the common average SNR received at each antenna with noise variance normalized to unity.}range, thus providing useful guidelines in practical system designs.
\subsection{Large mmWave Multiuser Channels}
Apart from ideal i.i.d. Rayleigh fading channels, our proposed PZF scheme can also be applied to the mmWave communication which is known to have very limited multipath components.
To capture this poor scattering nature, in the simulation, we adopt a geometric channel model \cite{Roh:Samsung}-\cite{Sayeed:MUmmWave}
\begin{align}\label{eq:channel}
{\bf h}_k^H = \sqrt{\frac{N_t}{N_p}} \sum \limits_{l=1}^{N_p}\alpha_l^k {\bf a}^H\left(\phi_l^k, \theta_l^k \right)
\end{align}
where each user is assumed to observe the same number of propagation paths, denoted by $N_p$, the strength associated with the $l$th path seen by the $k$th user is represented by $\alpha_l^k$ (assuming ${\alpha_l^k \sim \mathcal {CN}}(0, 1)$), and $\phi_l^k(\theta_l^k)$ is the random azimuth (elevation) angle of departure drawn independently from uniform distributions over $[0, 2\pi]$. ${\bf a}(\phi_l^k, \theta_l^k)$ is the array response vector depending only on array structures. Here we consider a uniform linear array (ULA) whose array response vector admits a simple expression, given by \cite[Eq.~(6)]{Heath:mmWaveSparse} where $d$ is the normalized antenna spacing.

We compare in Fig.~\ref{fig:mmWaveCompareScheme} our proposed PZF scheme against the beamspace MIMO (B-MIMO) scheme proposed in \cite{Sayeed:MUmmWave}, which essentially steers streams onto the approximate strongest paths (using DFT matrix columns) at the RF domain and performs low-dimensional baseband ZF precoding based on the equivalent channel. For fair comparison, the BS is also assumed to have a total of $K$ chains. The B-MIMO scheme achieves desirable performance in line-of-sight (LoS) channel but fails to capture sparse multipath components in non-LoS channels.

\begin{figure}
\centering
\begin{minipage}{.46\textwidth}
\centering
\includegraphics[width = \textwidth, clip,keepaspectratio]{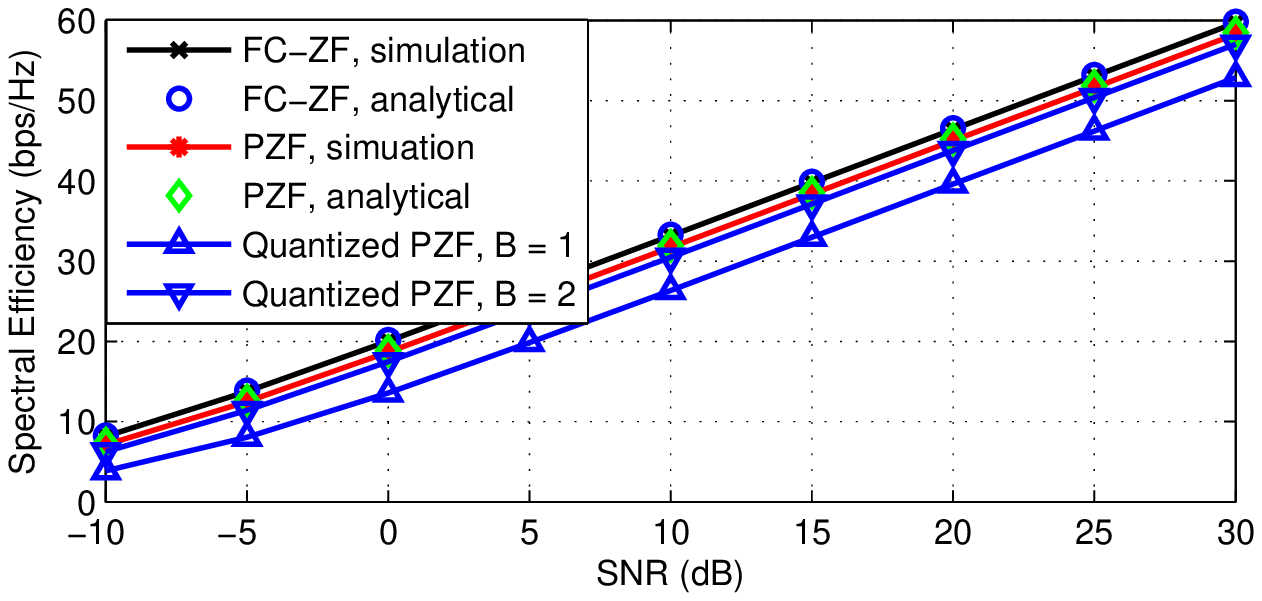}
\caption{Spectral efficiency achieved by different precoding schemes in large mmWave multiuser systems with $N_t = 128, K = 4, d = \frac{1}{2}$ and $N_p = 10$, obtained from averaging $1000$ channel realizations.} \label{fig:mainCompareScheme}
\end{minipage}\hfill
\begin{minipage}{.46\textwidth}
\centering
\includegraphics[width = \textwidth, clip,keepaspectratio]{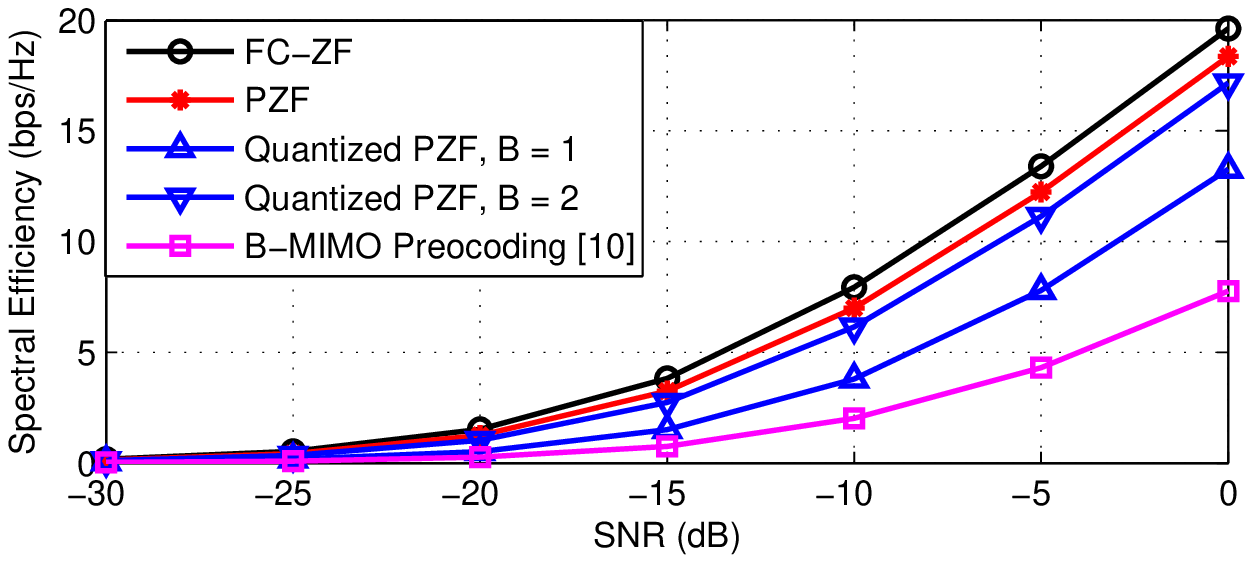}
\caption{Spectral efficiency achieved by different precoding schemes in large mmWave multiuser systems with $N_t = 128, K = 4, d = \frac{1}{2}$ and $N_p = 10$, obtained from averaging $1000$ channel realizations.}\label{fig:mmWaveCompareScheme}
\end{minipage}
\end{figure}

\section{Conclusion}
In this paper, we have studied a large multiuser MIMO system under practical RF hardware constraints. We have proposed to approach the desirable yet infeasible full-complexity ZF precoding with low-complexity hybrid PZF scheme. The RF processing was designed to harvest the large power gain with reasonable complexity, and the baseband precoder was then introduced to facilitate multi-stream processing. Its performance has been characterized in a closed form and further demonstrated in both Rayleigh fading and poorly scattered mmWave channels through computer simulations.

\end{document}